\begin{document}

\title{Gaussian Two-way Relay Channel \\ with Private Information for the Relay}

\author{Chin~Keong~Ho,
        Kiran T. Gowda,
         and~Sumei Sun
\thanks{{This work was presented in part at IEEE ICC, Dresden, Germany, June 2009.
}}
\thanks{{C. K. Ho and Sumei Sun are with the Institute for Infocomm Research, A$^\star$STAR, Singapore.
E-mail: \{hock, sunsm\}@i2r.a-star.edu.sg.
}}%
\thanks{{Kiran T. Gowda is with Mobile Communications Department, EURECOM, Sophia-Antipolis, France. He conducted part of this work while with Institute for Infocomm Research, A$^\star$STAR, Singapore.
Email: kiran.gowda@eurecom.fr.
}}
}

\newcommand{\forreport}[1]{}

\newtheorem{conjecture}{Conjecture}
\newtheorem{remark}{Remark}
\newtheorem{insight}{Insight}
\newtheorem{question}{Question}
\newtheorem{proposition}{Proposition}
\newtheorem{corollary}{Corollary}
\newtheorem{lemma}{Lemma}
\newtheorem{assumption}{Assumption}
\newtheorem{theorem}{Theorem}
\newtheorem{example}{Example}
\newtheorem{property}[theorem]{Property}

\newcommand{\myse}{\IEEEyessubnumber} 
\newcommand{\myses}{\myse\IEEEeqnarraynumspace} 

\newcommand{\set}[1]{\mathcal{#1}}

\newcommand{\bn}{\begin{enumerate}}
\newcommand{\en}{\end{enumerate}}

\newcommand{\bi}{\begin{itemize}}
\newcommand{\ei}{\end{itemize}}

\newcommand{\be}{\begin{IEEEeqnarray}{rCl}}
\newcommand{\ee}{\end{IEEEeqnarray}}

\newcommand{\benl}{\begin{IEEEeqnarray*}}
\newcommand{\eenl}{\end{IEEEeqnarray*}}

\newcommand{\bel}{\begin{IEEEeqnarray}}
\newcommand{\eel}{\end{IEEEeqnarray}}

\newcommand{\ben}{\begin{IEEEeqnarray*}{rCl}}
\newcommand{\een}{\end{IEEEeqnarray*}}

\newcommand{\barr}{\begin{array}}
\newcommand{\earr}{\end{array}}

\newenvironment{definition}[1][Definition:]{\begin{trivlist}
\item[\hskip \labelsep {\it #1}]}{\end{trivlist}}

\newcommand{\ud}{\mathrm{d}} 

\newcommand{\FigSize}{0.6}
\newcommand{\FigSizeSmall}{0.5}

\newcommand{\avesnr} {\bar{\gamma}} 
\newcommand{\snr} {\gamma} 

\newcommand{\re}[1]{(\ref{#1})}

\newcommand{\Pe} {P_{\mathrm {e}}} 

\newcommand{\goodgap}{%
\hspace{\subfigtopskip}%
\hspace{\subfigbottomskip}}

\newcommand{\dhat}[1]{\Hat{\Hat{#1}}} 
\newcommand{\that}[1]{\Hat{\Hat{\Hat{#1}}}} 
\newcommand{\dtilde}[1]{\Tilde{\Tilde{#1}}} 
\newcommand{\ttilde}[1]{\Tilde{\Tilde{\Tilde{#1}}}} 

\newcommand{\trace}[1]{\mathrm{tr}\{#1\}}

\renewcommand{\Pe} {P^{(n)}_{\mathrm {e}}} 

\renewcommand{\FigSize}{0.46}

\newcommand{\Rmac}{\bar{R}}

\newcommand{\sou}{\ensuremath{{\mathsf{S}}}\xspace}
\newcommand{\dest}{\ensuremath{{\mathsf{D}}}\xspace}
\newcommand{\relay}{\ensuremath{{\mathsf{R}}}\xspace}

\newcommand{\rlabel}{\mathrm{r}}
\newcommand{\Yr}{Y_{\rlabel m}}
\newcommand{\Yrn}{Y_{\rlabel}^n}
\newcommand{\Yrall}{Y_{\rlabel 1}, \cdots, Y_{\rlabel n}}
\newcommand{\Xr}{X_{\rlabel m}}
\newcommand{\yrn}{y_{\rlabel}^n}
\newcommand{\yr}{y_{\rlabel m}}
\newcommand{\xr}{x_{\rlabel m}}
\newcommand{\Woner}{W_{1\rlabel}}
\newcommand{\Wtwor}{W_{2\rlabel}}
\newcommand{\Whatoner}{\hat{W}_{1\rlabel}}
\newcommand{\Whattwor}{\hat{W}_{2\rlabel}}
\newcommand{\Xoner}{X_{1\rlabel m}}
\newcommand{\Xtwor}{X_{2\rlabel m}}
\newcommand{\Poner}{P_{1\rlabel}}
\newcommand{\Ptwor}{P_{2\rlabel}}
\newcommand{\Wonerhat}{\hat{W}_{1\rlabel}}
\newcommand{\Wtworhat}{\hat{W}_{2\rlabel}}
\newcommand{\wonerhat}{\hat{w}_{1\rlabel}}
\newcommand{\wtworhat}{\hat{w}_{2\rlabel}}
\newcommand{\woner}{w_{1\rlabel}}
\newcommand{\wtwor}{w_{2\rlabel}}
\newcommand{\Wir}{W_{i\rlabel}}
\newcommand{\wir}{w_{i\rlabel}}

\newcommand{\Pir}{P_{i\rlabel}}
\newcommand{\Prone}{P_{\rlabel 1}}
\newcommand{\Prtwo}{P_{\rlabel 2}}
\newcommand{\Pronesub}{P_{\rlabel 1,k}}
\newcommand{\Prtwosub}{P_{\rlabel 2,k}}
\newcommand{\Crmin}{C_{\mathrm{bc}}}
\newcommand{\Crminsub}{C_{\mathrm{bc},k}}
\newcommand{\Crminpermsub}{C_{\mathrm{bc},\pi(k)}}
\newcommand{\Pri}{P_{\rlabel i}}

\newcommand{\Cmac}{\mathcal{C}_{\mathrm{ma}}}
\newcommand{\Ccomp}{\mathcal{C}_{\mathrm{ma}}}
\newcommand{\Rcomp}{\mathcal{R}_{\mathrm{ma}}}
\newcommand{\Cout}{\overline{\mathcal{C}}_{\mathrm{ma}}}
\newcommand{\Cbc}{\mathcal{C}_{\mathrm{bc}}}
\newcommand{\OBbc}{\overline{\mathcal{C}}_{\mathrm{bc}}}
\newcommand{\Rbc}{\mathcal{R}_{\mathrm{bc}}}
\newcommand{\Cbcsub}{\mathcal{C}_{\mathrm{bc},k}}
\newcommand{\Rsep}{\mathcal{R}_{\mathrm{1}}}
\newcommand{\Rsepma}{\mathcal{R}_{\mathrm{1,ma}}}
\newcommand{\Rts}{\mathcal{R}_{\mathrm{2}}}
\newcommand{\Rtsma}{\mathcal{R}_{\mathrm{2,ma}}}
\newcommand{\Rtssub}{\mathcal{R}_{\mathrm{ts},k}}
\newcommand{\Rgeneral}{\mathcal{R}_{\mathrm{gen}}}

\newcommand{\rtwouser}{\widetilde{\mathbf{r}}}
\newcommand{\rallsub}{\overline{\mathbf{r}}}

\newcommand{\amax}{\alpha^{\mathrm{max}}}

\newcommand{\Nsub}{K}

\newcommand{\Rmin}{R^{\mathrm{min}}}

\newcommand{\Roner}{R_{1\rlabel}}
\newcommand{\Rtwor}{R_{2\rlabel}}
\newcommand{\Rir}{R_{i\rlabel}}

\maketitle



\begin{abstract}
We introduce a generalized  two-way relay  channel where two sources exchange information (not necessarily of the same rate) with help from a relay, and each source additionally sends private information to the relay. We consider the Gaussian setting where all point-to-point links are Gaussian channels. For this channel, we consider a two-phase protocol consisting of a multiple access channel (MAC) phase and a broadcast channel (BC) phase. We propose a general decode-and-forward (DF) scheme where the MAC phase is related to computation over MAC, while the BC phase is related to BC with receiver side information. In the MAC phase, we time share a capacity-achieving code for the MAC and a superposition code with a lattice code as its component code. We show that the proposed DF scheme is near optimal for any channel conditions, in that it achieves rates within half bit of the capacity region of the two-phase protocol.
\end{abstract}

\section{Introduction}

Two-way relaying  is an effective means of exchanging information between two sources $\sou_1,\sou_2$ with help from a relay $\relay$ \cite{KimMitranTarokhIT08, Hammerstromspawc07,Schnurr07,Wilson10,OechteringBoche07,Rankov06,hoICC08,NamChungLee10}.
While more phases can be used \cite{KimMitranTarokhIT08},  a two-phase protocol that is relevant when the sources cannot listen to each other is typically considered: the relay listens in the first phase, then performs relaying in the second phase.
Two relaying schemes are widely studied.
In the decode-and-forward (DF) scheme \cite{Rankov06, KimMitranTarokhIT08,Hammerstromspawc07,Schnurr07,Wilson10,OechteringBoche07}, the relay first decodes some or all of the information bits from both sources, while in the amplify-and-forward scheme \cite{Rankov06, hoICC08}, the relay simply forwards the received symbols.
In both schemes, each source removes the self-interference that originates from itself in the first phase, so as to decode the desired message in the second phase.

In the current literature for two-way relaying, e.g., \cite{KimMitranTarokhIT08, Hammerstromspawc07,Schnurr07,Wilson10,OechteringBoche07,Rankov06,hoICC08,NamChungLee10}, the relay does not recover any information from the sources explicitly for its own use. In practice, the relay may require side information from the sources to facilitate two-way relaying,
e.g., to achieve phase or frequency synchronization, or to update channel state information or queue information.
For simplicity, we model the required side information as {\em private messages} $\Woner,\Wtwor$ to be communicated from sources $\sou_1,\sou_2$, respectively, to the relay, see Fig.~\ref{fig:overview}. In general, common (i.e., non-private) information may also be sent to all the other nodes, e.g., in \cite{OechteringBoche07} the relay sends a common message to both sources, but such a multicast scenario is not covered here.

In this paper, we consider a generalized two-way relay channel where two sources exchange information (not necessarily of the same rate) and each source sends private information to the relay.
We consider the Gaussian setting where all point-to-point channels are Gaussian channels. We focus on the two-phase protocol shown in Fig.~\ref{fig:protocol1}. In the multiple access channel (MAC) phase, the sources transmit, while in the broadcast channel (BC) phase, the relay transmits.
Both phases are carried out over orthogonal radio resources.
This protocol is  relevant if the communication link between $\sou_1$ and $\sou_2$ is weak or absent.

\begin{figure}
\centering
    \hspace{0cm}
    \xymatrix@R=0.1cm@C=2cm{
    *+[o][F-]{\sou_1}\ar@/^2ex/@{-->}^{W_{12}}[dd]\ar@/^2ex/@{-->}^{\Woner}[rd] \\
    & *++[o][F-]{\relay}   \\
    *+[o][F-]{\sou_2}\ar@/^2ex/@{-->}[uu]^{W_{21}}\ar@/_2ex/@{-->}[ru]_{\Wtwor} \\ \;
    }
\caption{A generalized two-way relay. Each dotted arrow represents the flow of information, via a message $W_{ij}$, from source $\sou_i$ to its final destination, namely, source $\sou_j$ or relay $\relay$.
}
\label{fig:overview}
\end{figure}
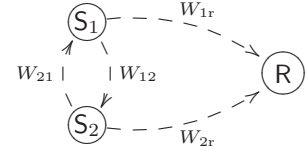

\begin{figure}
\centering
\subfigure[Multiple access channel (MAC) phase.
] 
{
    \label{subfig:mac}
    \xymatrix@R=0.1cm@C=0.5cm{
    \Woner, W_{12} \ar@{=>}[r] & *+[o][F-]{\sou_1}\ar@{->}[rdr]_>>>>{\Yrn}_<<<{X_1^n}
    \\
    & &  & *++[o][F-]{\relay}\ar@{=>}[r] & \Wonerhat, \Wtworhat\hspace{2cm}  
    \\
    W_{21}, \Wtwor \ar@{=>}[r] & *+[o][F-]{\sou_2}\ar@{->}[urr]^<<<{X_2^n} \\ \;
    }
}
\subfigure[Broadcast (BC) phase.
] 
{
    \label{subfig:br}
    \xymatrix@R=0.1cm@C=0.5cm{
    {
    \hat{W}}_{21} \ar@{<=}[r] & *+[o][F-]{\sou_1}\ar@{<-}_<<<<{Y_1^n}_>>>{X_{\rlabel}^n}[rrd]&
    \\
    && & *++[o][F-]{\relay}\ar@{<=}[r] & \Yrn \hspace{1.8cm}  
    \\
    {\hat{W}}_{12} \ar@{<=}[r] & *+[o][F-]{\sou_2}\ar@{<-}^<<<<{Y_2^n}[urr]&  \\ \;
    }    
}
\caption{A two-phase protocol for the generalized two-way relaying.
A transmission is represented by $\rightarrow$, an encoding or decoding operation by $\Rightarrow$.
}
\label{fig:protocol1}
\end{figure}
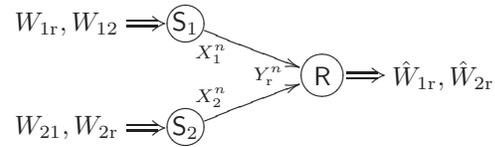
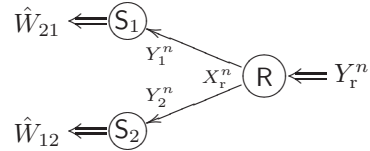

We view the generalized two-way relay channel as an amalgam of a conventional two-way relay channel where no private information is sent, and a conventional MAC where only private information is sent by the sources to the relay.
With this view, we propose a DF scheme for the two-phase protocol.
This DF scheme corresponds closely to computation over MAC \cite{NazerGastpar07, NazerGastparAllert07} in the MAC phase, and to the BC with receiver side information \cite{WuISIT07,TuncelIT07,Oechtering08} in the BC phase.
Specifically, in the MAC phase we propose an equal-exchange-rate with bit relabeling (EER-BR) scheme that involves two steps. First, some of the exchange message bits are relabeled as private information bits such that the messages to be exchanged are of equal rates. Second, to transmit the relabeled messages, we time share two coding schemes, namely a capacity-achieving code for the conventional MAC and a superposition code with a lattice code as its component code.
The overall DF scheme is near optimal in that reliable decoding is possible if $(R_{12},R_{21},\Roner,\Rtwor)$ lies within half bit of the capacity region of the two-phase protocol, where $R_{ij}$ is the achievable rate of the message from node $i$ to node $j$.
This holds for arbitrary transmission powers and channel conditions in the MAC and BC phases, e.g., channel reciprocity may not hold in general.
Our result may be treated as a generalization of the result in \cite{Wilson10}, where the conventional two-way relay with $\Roner=\Rtwor=0$ and $R_{12}=R_{21}$ is considered. 
Key to our DF scheme is the lattice code used for computation over MAC that is introduced in \cite{Wilson10,NazerGastparAllert07}.

{\em Notations:}
Let $C(x)=1/2 \log(1+x), D(x)=1/2 \max\{0,\log(1/2+x)\}, x\geq 0$.
Logarithms are of base two. Rates are expressed in bit/symbol. 
Upper case letters denote random variables. Lower case letters denote the values of random variables.
We collect $n$ elements $X_{1}, \cdots, X_{n}$ as a vector $X^n$.

\section{System Model}\label{sec:system}

The generalized two-way relay channel is shown in Fig.~\ref{fig:overview}. Two sources $\sou_1, \sou_2$ exchange messages $W_{12} \in\{1,\cdots,2^{nR_{12}}\}$ and $W_{21}\in\{1,\cdots,2^{nR_{21}}\}$, respectively.
In addition, $\sou_i$ sends a message $\Wir\in\{1,\cdots,2^{n\Rir}\}$ to the relay, $i=1,2$.
The messages are generated independently with a uniform distribution.

\subsection{Two-Phase Protocol}

We consider the two-phase protocol as shown in Fig.~\ref{fig:protocol1}, which consists of a MAC phase and a BC phase.
In each phase, $n$ channel symbols are transmitted; the extension for different number of channel symbols in both phases is straightforward.
The discrete time index $m$ ranges from $1$ to $n$ in both phases.
\\
{\underline{MAC phase:}}
$\sou_1$ encodes both messages $W_{12}, \Woner$ to form the codeword $X_1^n$ for transmission in the MAC phase. Similarly, $\sou_2$ encodes  $W_{21}, \Wtwor$ to form the codeword $X_2^n$.
The relay thus receives at time $m$
\begin{IEEEeqnarray}{rCl}\label{eqn:Yma}
\Yr=  X_{1m}(W_{12}, \Woner) + X_{2m}(W_{21}, \Wtwor) + Z_m, 
\end{IEEEeqnarray}
where
$Z_m\sim\mathcal{N}(0,1)$ is zero-mean unit-variance i.i.d. Gaussian noise.
All signals are real-valued.
We impose the power constraints $\sum_{m=1}^n |x_{im}|^2 \leq n P_i, i=1,2.$
Without loss of generality, let
$
P_1  \leq P_2.
$
\\
{\underline{BC phase:}}
The relay uses the received signal $\Yrn$ to decode for its private messages as $\Wonerhat, \Wtworhat$, and also to form a codeword $X_{\rlabel}^n$ for transmission in the BC phase. Source $\sou_i, i=1,2,$ thus receives at time $m$
\begin{IEEEeqnarray}{rCl}\label{eqn:Ybc}
 Y_{im} = \sqrt{\Pri} \Xr(\Yrn) +  Z'_{im}, 
\end{IEEEeqnarray}
where $Z'_{im}\sim\mathcal{N}(0,1)$ is i.i.d. Gaussian noise and $\Pri$ is the SNR from the relay to source $\sou_i$.
Without loss of generality, we impose the power constraint $\sum_{m=1}^n |x_{\mathrm{r}m}|^2 /n  \leq 1$.
Using $Y_1^n$, as well as the previously transmitted messages $W_{12}, \Woner$ as side information,  $\sou_1$ decodes its desired message as $\hat{W}_{21}$. Note that $X_{1}^n$ can be constructed from $W_{12}, \Woner$ by the source $\sou_1$ (during decoding) and hence is also implicitly available as side information.
Similarly, using $Y_2^n$  and side information $(W_{21}, \Wtwor)$,  $\sou_2$ decodes its desired message as $\hat{W}_{12}$.

An error event is said to occur if at least one of the messages in $W\triangleq (W_{12}, W_{21}, \Woner, \Wtwor)$ is not decoded correctly by the intended {\em final} destination at the end of a protocol cycle.
Thus, it is not necessary for the relay to decode $W_{12}$ or $W_{21}$.
The rate tuple $(R_{12},R_{21}, \Roner, \Rtwor)\in\mathbb{R}^4_+$ is said to be {\em achievable} if the average probability of error $\Pe$ can be driven to zero for $n\rightarrow \infty$.
An {\em achievable rate region} $\mathcal{R}$ is a collection of achievable rate tuples.
The {\em capacity region} $\mathcal{C}$ is the closure of the set of all achievable rate tuples, and its outer bound is denoted as $\overline{\mathcal{C}}$. Thus, $\mathcal{R}\subseteq \mathcal{C} \subseteq \overline{\mathcal{C}}\subseteq\mathbb{R}_{+}^4$.

\subsection{An Outer Bound for the Capacity Region}
Theorem~\ref{thm:converse} states an outer bound $\overline{\mathcal{C}}$ for the two-phase protocol, which holds for any source and relay processing, and for any Gaussian channels in the MAC and BC phases, i.e., $P_1, P_2, \Prone, \Prtwo$ are arbitrary and so channel reciprocity is not assumed.

We recall that all messages are mutually independent.
The  $\sou_1$-and-$\sou_2$-to-$\relay$ channel, as well as the $\relay$-to-$\sou_1$ and $\relay$-to-$\sou_2$ channels, are memoryless with Gaussian transition probabilities given by  $p^*(\yr|x_{1m},x_{2m}), p^*(y_{1m} | \xr),  p^*( y_{2m}| \xr)$, respectively.
In general, we express the encoding functions for $\sou_1$, $\sou_2$ and $\relay$ as  $p(x_{1}^n| \woner, w_{12})$, $p(x_{2}^n| \wtwor, w_{21})$,   $p(x_{\rlabel}^n|\yrn)$, and their decoding functions as $p(\hat{w}_{21}|y_1^n, \woner, w_{12})$, $p(\hat{w}_{12}|y_2^n, \wtwor, w_{21})$, $p(\wonerhat,\wtworhat|\yrn)$, respectively.
Note that each source can use  its previously transmitted messages as side information for decoding.
Thus, the joint distribution factorizes as 
\begin{IEEEeqnarray}{rrl}\nonumber
&& p(w, x^n_1,x^n_2,\yrn, y_1^n, y_2^n, \hat{w})
= p(\woner) p(w_{12}) p(\wtwor) p(w_{21}) \\ \nonumber
&&\times p(x_{1}^n| \woner, w_{12}) p(x_{2}^n| \wtwor, w_{21})   p^*(\yrn|x_{1}^n,x_{2}^n) p(\wonerhat,\wtworhat|\yrn)  \\ \nonumber
&&
\times p(x_{\rlabel}^n|\yrn)
p^*(y_{1}^n | x_{\rlabel}^n)  p^*( y_{2}^n| x_{\rlabel}^n)
 p(\hat{w}_{21}|y_1^n, \woner, w_{12}) \\
&& p(\hat{w}_{12}|y_2^n, \wtwor, w_{21}),
\label{eqn:pdfall}
\end{IEEEeqnarray}
where $w=(\woner, w_{12}, \wtwor, w_{21})$ and
$\hat{w}$ denotes the decoded message of $w$.
Fig.~\ref{fig:converse_general}  relates the random variables by a dependence diagram.

\begin{figure}
\centering
\hspace{0cm}
\xymatrix@R=0.4cm@C=0.35cm{
\Woner\ar@{->}[rd] \ar@{->}@/^4ex/[rrrrd] \\
  W_{12}\ar@{->}[r] \ar@{->}@/^4ex/[rrrr]  & X_1^n\ar@{->}[dr]  && Y_1^n  \ar@{->}[r] & \hat{W}_{21}\\
& & \Yrn \ar@{->}[r] \ar@{->}[d] & X_{\rlabel}^n  \ar@{->}[d] \ar@{->}[u]&\\
W_{21} \ar@{->}[r] \ar@{->}@/_4ex/[rrrr]  & X_2^n\ar@{->}[ur] & (\Wonerhat, \Wtworhat) & Y_2^n  \ar@{->}[r] & \hat{W}_{12}\\
\Wtwor\ar@{->}[ru] \ar@{->}@/_4ex/[rrrru]& &
\;
}
\caption{Dependence diagram for the generalized two-way relay channel.}
\label{fig:converse_general}
\end{figure}
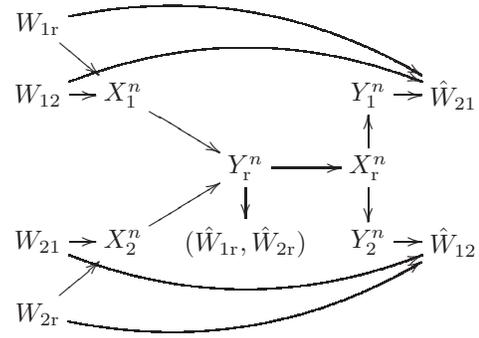


\newcommand{\mybox}[1]{\mbox{\small{#1}}}

\begin{theorem}\label{thm:converse}
Consider the Gaussian two-way relay channel with distribution \re{eqn:pdfall}.
If $\Pe\rightarrow 0$ for $n\rightarrow \infty$, then $(R_{12}, R_{21},\Roner,\Rtwor)\in \overline{\mathcal{C}}$, where the outer bound is
\begin{IEEEeqnarray}{rlr}\nonumber
\overline{\mathcal{C}}=\big\{&(R_{12}, R_{21},\Roner,\Rtwor)\subseteq \mathbb{R}^4_+ :   &
\\ \label{eqn:thm:converse}
&(R_{12}, R_{21})\in \OBbc,(R_{12}, R_{21},\Roner,\Rtwor)\in \Cout &\big\}
\end{IEEEeqnarray}
where
\begin{IEEEeqnarray}{rcl}\nonumber
\OBbc \triangleq \big\{  &&(R_{12}, R_{21})\subseteq \mathbb{R}^2_+ :  \\
&& R_{12} \leq  C(\Prtwo) , \; R_{21} \leq  C(\Prone)  \big\}
\label{eqn:conversea} \\
\nonumber
\Cout \triangleq  \big\{ &&  (R_{12}, R_{21},\Roner,\Rtwor)\subseteq \mathbb{R}^4_+ : \\
\label{eqn:converse13}
\IEEEyessubnumber\label{eqn:converse1}
&& \Roner + R_{12} \leq C(P_1 ),
\\ \IEEEyessubnumber\label{eqn:converse2}
&& \Rtwor + R_{21} \leq  C(P_2 ) \\
\IEEEyessubnumber\label{eqn:converse3}
&& \Roner + \Rtwor + \max\{R_{12}, R_{21}\} \leq C(P_1  + P_2 )
\big \}.
\;\;\;\;
\end{IEEEeqnarray}
\end{theorem}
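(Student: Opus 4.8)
The plan is a Fano‑plus‑genie (cut‑set style) converse; the only delicacy is (i) routing the exchange messages $W_{12},W_{21}$ through their \emph{true} destinations $\sou_2,\sou_1$ rather than the relay, which need not decode them, and (ii) choosing the conditioning variables so that the differential‑entropy steps come out as $C(P_i)$ and $C(P_1+P_2)$ instead of something looser. First I would invoke Fano's inequality on \re{eqn:pdfall}: from $\Pe\to0$ there is $\epsilon_n\to0$ with $H(\Woner\mid\Yrn)\le n\epsilon_n$, $H(\Wtwor\mid\Yrn)\le n\epsilon_n$, $H(W_{21}\mid Y_1^n,\Woner,W_{12})\le n\epsilon_n$ and $H(W_{12}\mid Y_2^n,\Wtwor,W_{21})\le n\epsilon_n$. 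From \re{eqn:pdfall} I then record the structure used throughout: $Y_2^n$ (resp.\ $Y_1^n$) depends on all other variables only through $X_{\rlabel}^n$, and $X_{\rlabel}^n$ only through $\Yrn$, so $W_{12}\to(\Yrn,W_{21},\Wtwor)\to(Y_2^n,W_{21},\Wtwor)$ is Markov and hence, by data processing and Fano, $H(W_{12}\mid\Yrn,W_{21},\Wtwor)\le n\epsilon_n$ (symmetrically for $W_{21}$); also $\Yrn=X_1^n+X_2^n+Z^n$ with $Z^n$ independent of everything, and $X_1^n\perp X_2^n$ since they are randomized functions of the independent pairs $(\Woner,W_{12})$ and $(W_{21},\Wtwor)$.

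For $\OBbc$: $W_{12}$ is decoded at $\sou_2$ from $(Y_2^n,W_{21},\Wtwor)$ and is independent of $(W_{21},\Wtwor)$, so Fano together with the Markov chain $W_{12}\to(X_{\rlabel}^n,W_{21},\Wtwor)\to Y_2^n$ gives $nR_{12}\le I(X_{\rlabel}^n;Y_2^n\mid W_{21},\Wtwor)+n\epsilon_n=h(Y_2^n\mid W_{21},\Wtwor)-h(Z_2'^n)+n\epsilon_n$; dropping the conditioning and using that a Gaussian maximizes differential entropy for a given variance (so nonzero means are harmless) together with the relay power constraint yields $h(Y_2^n)\le\tfrac n2\log(2\pi e(\Prtwo+1))$, hence $R_{12}\le C(\Prtwo)+\epsilon_n$. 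The bound $R_{21}\le C(\Prone)$ is symmetric.

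For the bounds defining $\Cout$ I would use genie conditioning. For \re{eqn:converse1}, condition on $X_2^n$: since $(\Woner,W_{12})\perp(W_{21},\Wtwor,X_2^n)$, $n(\Roner+R_{12})=H(\Woner,W_{12}\mid W_{21},\Wtwor,X_2^n)=I(\Woner,W_{12};\Yrn\mid W_{21},\Wtwor,X_2^n)+H(\Woner,W_{12}\mid\Yrn,W_{21},\Wtwor,X_2^n)$; the residual is at most $H(\Woner\mid\Yrn)+H(W_{12}\mid\Yrn,W_{21},\Wtwor)\le2n\epsilon_n$, and by data processing along $(\Woner,W_{12})\to(X_1^n,X_2^n,\cdot)\to\Yrn$ and then subtracting the known $X_2^n$, $I(\Woner,W_{12};\Yrn\mid\cdot)\le I(X_1^n;\Yrn\mid W_{21},\Wtwor,X_2^n)=h(X_1^n+Z^n\mid\cdot)-h(Z^n)\le\tfrac n2\log(P_1+1)=nC(P_1)$, again using the variance/entropy bound with $X_1^n\perp Z^n$; thus \re{eqn:converse1} holds, and \re{eqn:converse2} follows by conditioning on $X_1^n$. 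For \re{eqn:converse3}, suppose first $R_{12}\ge R_{21}$ and condition on $W_{21}$: $n(\Roner+\Rtwor+R_{12})=H(\Woner,\Wtwor,W_{12}\mid W_{21})=I(\Woner,\Wtwor,W_{12};\Yrn\mid W_{21})+H(\Woner,\Wtwor,W_{12}\mid\Yrn,W_{21})$, the residual being at most $H(\Woner\mid\Yrn)+H(\Wtwor\mid\Yrn)+H(W_{12}\mid\Yrn,W_{21},\Wtwor)\le3n\epsilon_n$; and $I(\Woner,\Wtwor,W_{12};\Yrn\mid W_{21})\le I(X_1^n,X_2^n;\Yrn\mid W_{21})=h(\Yrn\mid W_{21})-h(Z^n)\le h(\Yrn)-h(Z^n)\le\tfrac n2\log(P_1+P_2+1)=nC(P_1+P_2)$, where the last step uses that $X_1^n$, $X_2^n$, $Z^n$ are mutually independent, so per‑symbol variances add. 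Hence $\Roner+\Rtwor+R_{12}\le C(P_1+P_2)+\epsilon_n$; the case $R_{21}>R_{12}$ (condition on $W_{12}$) gives the same with $R_{21}$, so \re{eqn:converse3} holds with the $\max$. Letting $n\to\infty$ gives $(R_{12},R_{21},\Roner,\Rtwor)\in\overline{\mathcal{C}}$.

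The hard part will be not any single inequality but making the genie conditioning serve three purposes at once: preserving the message entropy (the conditioning variables must be independent of the messages being bounded), killing the Fano residual (one needs a Markov chain from each bounded message back to $\Yrn$ — in particular $W_{12}$ must be traced through $\sou_2$, not the relay), and producing the tight Gaussian bound ($C(P_1)$ needs conditioning on the \emph{other} codeword $X_2^n$; $C(P_1+P_2)$ needs the input independence $X_1^n\perp X_2^n$ together with the variance — not second‑moment — form of the entropy bound, since the two codewords may have correlated, nonzero means). The remaining manipulations are routine.
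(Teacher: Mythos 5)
Your proposal is correct and follows essentially the same route as the paper's proof: Fano's inequality with transmitted-message side information at each source, a data-processing step that pulls the exchange message $W_{12}$ back from $Y_2^n$ to $\Yrn$, and standard Gaussian max-entropy arguments for the individual and sum-rate constraints, with the $\max$ in \re{eqn:converse3} obtained by proving the bound separately for $R_{12}$ and for $R_{21}$. The only (cosmetic) difference is that you condition on the genie codeword $X_2^n$ directly rather than invoking, as the paper does, that $X_2^n$ is a function of $(W_{21},\Wtwor)$ — which if anything handles randomized encoders slightly more cleanly.
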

\begin{proof}
See proof in Appendix~\ref{append:proof_converse_overall}.
\end{proof}

In Theorem~\ref{thm:converse}, we chose the subscripts in $\OBbc$ and $\Cout$ to emphasize that the regions in \eqref{eqn:conversea} and \eqref{eqn:converse13} are relevant only for the MAC and BC phases, respectively, since the power constraint terms therein relates only to their respective phases.

\begin{remark}
If $R_{12}=R_{21}=0$ (the channel degenerates to a classical MAC),  $\overline{\mathcal{C}}$ reduces to the well known MAC capacity region \cite{Cover06}.
If $\Roner=\Rtwor=0$ (the channel degenerates to a conventional two-way relay channel), $\overline{\mathcal{C}}$ reduces to the outer bound in \cite{Wilson10}.
\end{remark}

\section{Coding Schemes for the Two-Phase Protocol}\label{sec:schemes}

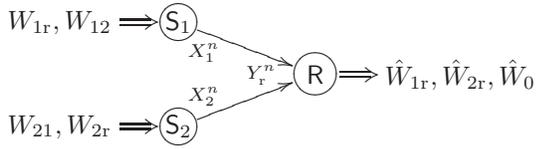
\begin{figure}
\centering
\hspace{0cm}
    \xymatrix@R=0.1cm@C=0.5cm{
    \Woner, W_{12} \ar@{=>}[r] & *+[o][F-]{\sou_1}\ar@{->}[rdr]_>>>>{\Yrn}_<<<{X_1^n}
    \\
    & &  & *++[o][F-]{\relay}\ar@{=>}[r] & \Wonerhat, \Wtworhat, {\hat{W}}_{0} 
    \\
    W_{21}, \Wtwor \ar@{=>}[r] & *+[o][F-]{\sou_2}\ar@{->}[urr]^<<<{X_2^n}  \\ \;
}
\caption{Computation over MAC. 
The relay $\relay$ decodes for $\Woner, \Wtwor$ and a function of messages $W_0= f(W_{12},W_{21})$.}
\label{fig:computeMAC}
\end{figure}

We propose a general DF strategy that relates the MAC and BC phases via an auxiliary message $W_0$.
Using $\Yrn$, the relay decodes for its private information $\Woner,  \Wtwor$, as well as an auxiliary message $W_0$ at rate $R_0$, where $W_0$ is a function of the messages to be exchanged, i.e.,
\forreport{Another possibility is for the relay to compress $W_0$ in a compress-and-forward scheme \cite{Schnurr07}.}
%
\begin{IEEEeqnarray}{rCl}\label{eqn:W0}
W_0=f(W_{12}, W_{21}).
\end{IEEEeqnarray}
Based on the estimate $\hat{W}_0$, the relay then broadcasts a codeword $X_{\rlabel}^n(\hat{W}_0)$ in the BC phase.

This approach in the MAC phase is related to computation over MAC \cite{NazerGastpar07,NazerGastparAllert07}, see
Fig.~\ref{fig:computeMAC}.
For a given function $f$, an error event is said to occur in the MAC phase if at least one of $W_0, \Woner, \Wtwor$ is not decoded correctly. 
The rate tuple $(R_{12}, R_{21}, \Roner,\Rtwor)$ is said to be achievable if the error probability can be driven to zero for $n\rightarrow \infty$.
The rate region in the MAC phase is denoted as $\Rcomp$.

Suppose $\hat{W}_0=W_0$, which occurs with high probability if $(R_{12}, R_{21}, \Roner,\Rtwor)\in\Rcomp$.
Using $Y_1^n$ and the side information $(X_1^n, \Woner, W_{12})$, $\sou_1$ decodes for $W_{21}$.
Similarly, $\sou_2$ decodes for $W_{12}$ using its side information.
This corresponds to a BC with receiver side information \cite{WuISIT07,TuncelIT07,Oechtering08}.
An error event is said to occur if at least one of $W_{12}, W_{21}$ is not decoded correctly.
The rate tuple $(R_{12}, R_{21})$ is said to be achievable if the error probability can be driven to zero for $n\rightarrow \infty$.
The achievable rate region in the BC phase is denoted as $\Rbc$.

Now if $\mathbf{r}=(R_{12}, R_{21}, R_{1r}, R_{2r})$ is achievable for computation over MAC and the same $(R_{12}, R_{21})$ is achievable for BC with receiver side information, then each message $W_{12}, W_{21}, \Woner, \Wtwor$ is decoded correctly by the intended final destination.
Thus, $\mathbf{r}$ is achievable for the two-phase protocol of the generalized two-way relay channel.
An achievable rate region is thus
\begin{IEEEeqnarray}{lLl}\nonumber
\mathcal{R}(\Rbc, \Rcomp) \triangleq \{& (R_{12}, R_{21},\Roner,\Rtwor)\subseteq \mathbb{R}^4_+ :  & \\
& (R_{12}, R_{21})\in \Rbc, & \nonumber \\
& (R_{12}, R_{21},\Roner,\Rtwor)\in \Rcomp & \}.
\label{eqn:rateDF}
\end{IEEEeqnarray}

Next, we consider two specific schemes based on the DF strategy
and quantify their optimality in terms of the achievable rate regions.

\subsection{Conventional MAC Scheme}
In our first scheme, we define $W_0=(W_{12}, W_{21}).$
Thus, the relay decodes for $W=(W_{12},  W_{21}, \Woner, \Wtwor)$ in the MAC phase, then a codeword based on  $W_0$ is transmitted in the BC phase.
We call this the conventional MAC approach, as it can be implemented in the MAC phase using the classical MAC \cite{Cover06}.
Theorem~\ref{thm:1} gives the achievable rate region $\Rsep$. 

\begin{theorem}\label{thm:1}
The achievable rate region of the conventional MAC scheme is $\Rsep=\mathcal{R}(\OBbc, \Rsepma)$,  where
\begin{IEEEeqnarray}{LLL}\nonumber
\Rsepma = \{ & (R_{12}, R_{21}, \Roner,\Rtwor) \subseteq\mathbb{R}_{+}^4 : &  \\
\label{eqn:ratereg1}
\IEEEyessubnumber\label{eqn:ratereg31}
& \Roner+ R_{12} \leq C(P_1 ), &
\\ \IEEEyessubnumber\label{eqn:ratereg32}
& \Rtwor +R_{21}\leq  C(P_2 ), & \\
& \Roner+\Rtwor+R_{12}+R_{21} \leq C(P_1  + P_2 ) & \}.\;\;\;
\IEEEyessubnumber\label{eqn:ratereg33}
\end{IEEEeqnarray}
\end{theorem}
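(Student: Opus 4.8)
The plan is to treat the two phases separately and then glue them with the general achievability principle \re{eqn:rateDF}. Since the conventional MAC scheme is defined by the choice $W_0=(W_{12},W_{21})$, it suffices to show (i) that computation over the MAC with this particular $f$ achieves exactly $\Rcomp=\Rsepma$, and (ii) that the induced BC‑with‑receiver‑side‑information phase achieves exactly $\Rbc=\OBbc$; then $\mathcal{R}(\OBbc,\Rsepma)$ is achievable by \re{eqn:rateDF}, and the converse part of Theorem~\ref{thm:converse} shows neither component region can be enlarged, giving $\Rsep=\mathcal{R}(\OBbc,\Rsepma)$.

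For the MAC phase, observe that with $W_0=(W_{12},W_{21})$ the relay's error event --- at least one of $W_0,\Woner,\Wtwor$ decoded wrongly --- coincides with the event that the relay fails to jointly decode all four messages. Because $\sou_1$ maps $(W_{12},\Woner)$ into the single codeword $X_1^n$ and $\sou_2$ maps $(W_{21},\Wtwor)$ into $X_2^n$, this is precisely the two‑user Gaussian MAC \re{eqn:Yma} with independent messages of rates $R_{12}+\Roner$ and $R_{21}+\Rtwor$ under powers $P_1,P_2$. I would pick i.i.d.\ inputs $X_{im}\sim\mathcal{N}(0,P_i)$ and invoke the standard two‑user MAC coding theorem, which gives reliable joint decoding whenever $R_{12}+\Roner\le C(P_1)$, $R_{21}+\Rtwor\le C(P_2)$, and $R_{12}+R_{21}+\Roner+\Rtwor\le C(P_1+P_2)$, i.e.\ $\Rcomp\supseteq\Rsepma$; the reverse inclusion holds because Gaussian inputs are optimal for the Gaussian MAC, so $\Rsepma$ is the capacity region of the induced MAC and nothing outside it can be jointly decoded.

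For the BC phase, the relay holds $\hat{W}_0=(W_{12},W_{21})$ with high probability and must deliver $W_{21}$ to $\sou_1$, which already knows $(W_{12},\Woner,X_1^n)$, and $W_{12}$ to $\sou_2$, which already knows $(W_{21},\Wtwor,X_2^n)$. I would use a single i.i.d.\ Gaussian codebook $\{X_{\rlabel}^n(a,b)\}$ of $2^{n(R_{12}+R_{21})}$ codewords indexed by the message pair, with the relay transmitting $X_{\rlabel}^n(W_{12},W_{21})$. The key point is that $\sou_1$ uses its known $W_{12}$ to cut the codebook down to the sub‑codebook $\{X_{\rlabel}^n(W_{12},b):b\}$ of size $2^{nR_{21}}$ and then decodes over the point‑to‑point Gaussian channel \re{eqn:Ybc}, succeeding when $R_{21}\le C(\Prone)$; symmetrically $\sou_2$ restricts to $\{X_{\rlabel}^n(a,W_{21}):a\}$ and succeeds when $R_{12}\le C(\Prtwo)$. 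This yields $\Rbc\supseteq\OBbc$, and equality is immediate since $(R_{12},R_{21})\in\OBbc$ is already part of the converse in Theorem~\ref{thm:converse}. Finally, a union bound over the independent MAC noise $Z_m$ and BC noise $Z'_{im}$ drives the overall $\Pe\to 0$ throughout $\mathcal{R}(\OBbc,\Rsepma)$.

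I expect the BC‑phase argument to be where the actual idea lies: one must recognize that the \emph{complementary} side information --- each source already possessing the very message the relay must deliver to the \emph{other} source --- decouples the two broadcast links into independent point‑to‑point channels, so the achievable region is the full Cartesian product $\{R_{12}\le C(\Prtwo)\}\cap\{R_{21}\le C(\Prone)\}$ with no sum‑rate or time‑sharing tradeoff. This forces the joint‑pair codebook with per‑receiver codebook reduction (the bidirectional‑broadcast construction), and in particular it is strictly better than a naive network‑coding/XOR relay, which would be limited to $\min\{C(\Prone),C(\Prtwo)\}$. Everything else reduces to bookkeeping on top of the classical Gaussian MAC theorem.
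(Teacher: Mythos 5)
Your proposal is correct and follows essentially the same route as the paper: the MAC phase is reduced to the classical two-user Gaussian MAC by grouping $(W_{12},\Woner)$ and $(W_{21},\Wtwor)$ into single messages of rates $R_{12}+\Roner$ and $R_{21}+\Rtwor$, and the BC phase is identified as broadcast with complementary receiver side information achieving exactly $\OBbc$. The only difference is that you spell out the joint-pair Gaussian codebook with per-receiver sub-codebook reduction, whereas the paper simply cites the known capacity result of \cite{WuISIT07,TuncelIT07} for that phase.
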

\begin{proof}
In the classical Gaussian MAC described by $\Yr=  X_{1m}(W_{1}) + X_{2m}(W_{2}) + Z_m,$ where $Z_m$ is Gaussian noise and $\sum_{i} |x_{im}|^2\leq n P_i$ for $i=1,2$, a destination decodes messages $W_1,W_2$  at rate $\Rmac_1, \Rmac_2$ respectively. The capacity region is
$\Cmac \triangleq \left\{ (\Rmac_1,\Rmac_2) \subseteq\mathbb{R}_{+}^2:
\sum_{i\in\mathcal{S}} \Rmac_i \leq  C\left(\sum_{i\in\mathcal{S}} P_i \right) \forall \; \mathcal{S}\subseteq \{1,2\}\right\}$ \cite{Cover06}.
In the conventional MAC approach,  the relay becomes the destination with  $W_1=(\Woner, W_{12}), W_2=(\Wtwor, W_{21})$. Substituting $\Rmac_1=\Roner+R_{12}$ and $\Rmac_2=\Rtwor+R_{21}$ into $\Cmac$ then gives $\Rsepma$ in \re{eqn:ratereg1}.

Suppose $\hat{W}_0=W_0$, which occurs with high probability if the rate tuple lies in $\Rsepma$. Then the relay knows all messages in $W$. In this case, the BC capacity with receiver side information is known \cite{WuISIT07,TuncelIT07} and meets the outer bound in the BC phase in Theorem~\ref{thm:converse}, i.e., $\Rbc=\OBbc$.
\end{proof}

\begin{remark}
The conventional MAC scheme is optimal with respect to the BC phase, in the sense that every point in $\OBbc$ can be achieved in the BC phase assuming the messages are always correctly decoded in the MAC phase.  However, comparing the MAC phase region $\Rsepma$ with the corresponding upper bound $\Cout$ shows that the difference of \eqref{eqn:ratereg33} and \eqref{eqn:converse3} can be arbitrarily large at high SNR.
\end{remark}

\subsection{Equal-Exchange-Rate with Bit Relabeling Scheme}\label{sec:ts}

Next, we propose the EER-BR scheme and show that it achieves near-optimal performance.

In this scheme, we use the nested  lattice code $\mathcal{L}$ \cite{ErezZamir04}, associated with a fine lattice $\Lambda_f$  for lattice decoding  and a coarse lattice $\Lambda\subseteq \Lambda_f$ for signal shaping and constraining the power. In \cite{Wilson10}, the lattice code is used for two-way relaying for the case of $R_{12}=R_{21}$ and $\Roner=\Rtwor=0$.
Every lattice codeword $T^n\in \mathcal{L}$ is transmitted over $n$ symbols, and is mapped one-to-one to message $W$ of rate $R_{\mathcal{L}}$ via the mapping $g$ such that $T^n=g(W)$ and $W=g^{-1}(T^n)$.
Define the operation $\oplus$ according to $T^n_{1}\oplus T^n_{2} \triangleq T^n_{1}+ T^n_{2} \bmod\Lambda$ where $T^n_{1}, T^n_{2}\in \mathcal{L}$ and $\bmod\Lambda$ is the modulo operation over $\Lambda$.

\subsubsection{Equal Exchange Rates}

We first consider the EER scheme where we assume $R_{12}=R_{21}=R_0'$. 
We propose time sharing of Schemes 1 and 2 which are described below. 

\underline{Scheme 1 (Conventional MAC):}
Both sources send only their respective private messages $\Woner, \Wtwor$ to the relay.
The messages $W_{12}, W_{21}$ to be exchanged are not sent, i.e., $R_0'=0$.
In Scheme 1, the sources use independent Gaussian codes, which allows any rate pair in the capacity region $\Cmac$ to be achieved in the MAC phase.
Thus, any rate tuple $\mathbf{r}_1=(0,0,\Roner,\Rtwor)$ is achievable for $(\Roner,\Rtwor)\in \Cmac$.

\underline{Scheme 2 (Superposition):}
Recall that $P_1\leq P_2$. 
To send message $W_{12}$, the (weaker) source $\sou_1$ transmits $X_{12}^n$ with power $P_1$.
$\sou_1$ does not transmit any private message, i.e, $\Roner=0$.
To send message $W_{21}$, $\sou_2$  transmits $X_{21}^n$ at the same power of $P_1$.
Moreover, to send its private message $\Wtwor$, source $\sou_2$ employs the superposition technique to transmit $X_{22}^n$  with power $P_2-P_1$.
That is,
\begin{IEEEeqnarray}{rCl}\label{eqn:super1}
X_{1m} &=& \sqrt{P_1}X_{12m}(W_{12}), \\
\label{eqn:super2}
X_{2m} &=& \sqrt{P_1}X_{21m}(W_{21})+\sqrt{P_2-P_1}X_{22m}(\Wtwor),
\end{IEEEeqnarray}
where each codeword is subject to unit power constraints, i.e.,  $\sum_{m=1}^n |x_{12m}|^2 /n \leq 1$,  $\sum_{m=1}^n |x_{22m}|^2/ n \leq 1$, and $\sum_{m=1}^n |x_{21m}|^2 / n\leq 1$.
Here, $X_{22}^n$ is transmitted using a Gaussian code. The remaining signals use the {\em same} lattice code $\mathcal{L}$ of rate $R_{\mathcal{L}}=R_0'$ to give  $X_{21}^n(W_{21})=g(W_{21})=T_{21}^n$ and $X_{12}^n(W_{12})=g(W_{12})=T_{12}^n$.
For decoding, the relay employs successive decoding. Specifically, the relay first decodes for $\Wtwor$ of signal power $P_2-P_1$, treating  $\sqrt{P_1}(T_{12}^n+T_{21}^n)$ as interference of power $2P_1$. The zero-mean Gaussian distribution (with the same interference power) is the worst-case interference distribution, hence the rate $\Rtwor=C((P_2-P_1)/(1+2P_1))$ is achievable.
After reliably decoding $\Wtwor$,  $X_{22m}$ is removed from $X_{2m}$. The received signal after interference cancelation is thus
$\sqrt{P_1}(T_{12}^n+T_{21}^n)+Z^n$.
Then, following the approach in \cite{Wilson10}, the relay decodes for $T^n_0\triangleq T_{12}^n\oplus T_{21}^n$, which  allows $W_0=g^{-1}(T^n_0)$ to be obtained.
The rate $R'_0=D(P_1)$ is achievable by lattice decoding \cite{Wilson10},
thus  $\mathbf{r}_2=(D(P_1), D(P_1), 0, C((P_2-P_1)/(1+2P_1)))$ is achievable for the MAC phase.

\underline{EER Scheme:} 
We time share Schemes 1 and 2 so that  $(1-\alpha) \mathbf{r}_1 + \alpha \mathbf{r}_2$ is achievable for $0\leq \alpha \leq 1$.
The achievable rate region for the EER scheme is then given by
\begin{IEEEeqnarray}{rrl}\nonumber
\Rts = \big\{ && (R_0,R_0,\Roner,\Rtwor) : \\ &&
R_0 = \alpha D(P_1),
(\Roner,\Rtwor)\in \Rts'(\alpha),
0\leq \alpha\leq 1 \big\}
\;\;\;\;\;
\label{eqn:rts}
\end{IEEEeqnarray}
and $\Rts'(\alpha)$ denotes the region of $(\Roner,\Rtwor)$ such that
\begin{IEEEeqnarray}{rCl}\label{eqn:tsall}
\IEEEyessubnumber\label{eqn:ts0}
0\leq \Roner & \leq & (1-\alpha) C(P_1 ), \\
0\leq \Rtwor & \leq & (1-\alpha) C(P_2 ) + \alpha C\left(\frac{P_2-P_1}{1+2P_1}\right) \nonumber \\
&=& C(P_2) - \alpha \Gamma, \IEEEyessubnumber\label{eqn:ts1} \\
\Roner+\Rtwor &\leq&
(1-\alpha) C(P_1  + P_2 )+\alpha C\left(\frac{P_2-P_1}{1+2P_1}\right) \nonumber \\
&=&
C(P_1  + P_2 ) -\alpha C(2P_1)
\IEEEyessubnumber\label{eqn:ts2}
\;\;\;\;\;\;\;
\label{eqn:ts}
\end{IEEEeqnarray}
where $\Gamma\triangleq C(2P_1)+C(P_2)-C(P_1+P_2).$

\newcommand{\tWoner}{\tilde{W}_{1\rlabel}}
\newcommand{\tWtwor}{\tilde{W}_{2\rlabel}}

\subsubsection{Arbitrary Exchange Rates}

Denote the messages  in the EER scheme as $\mathcal{W}\triangleq (W_{12}, W_{21}, \Woner, \Wtwor)$ where $W_{12}, W_{21}$ are at the same rate of $R_0'$.
Denote the  messages  in the EER-BR scheme as $\mathcal{\tilde{W}}\triangleq (\tilde{W}_{12}, \tilde{W}_{21}, \tWoner, \tWtwor)$ where $R_{12}, R_{21}$ can be different.
For arbitrary $R_{12}, R_{21}$, we build on the  EER scheme with the bit-relabeling technique.
The key idea is to use the EER scheme to transmit the exchange messages at a common rate of $R_0'=\min\{R_{12}, R_{21}\}$, and transmit the remaining  $\delta\triangleq |R_{12}-R_{21}|$ bits of the (longer) exchange message  together with the private messages.

First, suppose $R_{12}\leq R_{21}$.
We split the message as $\tilde{W}_{21}=(\tilde{W}_{21}', \tilde{W}_{21}'')$, where $\tilde{W}_{21}'$ and $\tilde{W}_{21}''$ have respective rates $R_{12}$ and $\delta$.
We use the EER scheme by  relabeling the messages as
$W_{12}=\tilde{W}_{12}, W_{21}=\tilde{W}_{21}', \Woner=\tWoner, \Wtwor=(\tWtwor,\tilde{W}_{21}'')$.
That is, $\tilde{W}_{12}, \tilde{W}_{21}'$ become the messages to be exchanged, while $\tilde{W}_{21}''$ is sent as additional ``private" message to the relay (although the relay does not need this message).
Thus, if $(R_{12},R_{12},\Roner,\Rtwor)$ is achievable with the EER scheme, then  $(R_{12},R_{12}+\delta,\Roner,\Rtwor-\delta)$ for $0\leq \delta\leq \Rtwor$ is also achievable with the EER-BR scheme.
From \re{eqn:rts}, the achievable rate region for $R_{12}\leq R_{21}$ is thus
\begin{IEEEeqnarray}{rrl}
\Rtsma = \big\{  && (R_{0},R_{0}+\delta,\Roner,\Rtwor-\delta) : \nonumber \\
&& R_0 = \alpha D(P_1), (\Roner,\Rtwor)\in \Rts'(\alpha),
\nonumber \\ &&
0\leq \alpha\leq 1, 0\leq \delta\leq \Rtwor \big\}.
\label{eqn:rts2}
\IEEEyessubnumber\label{eqn:rts21}
\end{IEEEeqnarray}
Suppose the rate tuple lies in $\Rtsma$. Then $\tilde{W}_0\triangleq (W_0, W_{21}'')$ can be decoded,  where $W_0=g^{-1}(g(W_{12}) \oplus g(W_{21}'))$.
In the BC phase, the relay broadcasts $\tilde{W}_0$ using a Gaussian code. The sources use their side information  to decode their messages. Decoding is reliable if $(R_{12}, R_{21})\in \OBbc$, where the proof follows as a special case of the achievability proof in \cite{WuISIT07} with $R_1=R_2=R_3=0$. In \cite{WuISIT07}, $W_0$ is defined by the bit-wise addition of $W_{12}$ and $W_{21}'$, instead of $W_0=g^{-1}(g(W_{12}) \oplus g(W_{21}'))$ defined here,
but the proof still follows through since each message can always be uniquely mapped to a lattice codeword.

Suppose $R_{21}\leq R_{12}$. Similarly, the achievable rate region in the MAC phase is
\begin{IEEEeqnarray}{rrl}
\Rtsma = \big\{ &&  (R_{0}+\delta,R_{0},\Roner-\delta,\Rtwor) ) : \nonumber \\
&& R_0 = \alpha D(P_1), (\Roner,\Rtwor)\in \Rts'(\alpha), \nonumber
\\ &&
0\leq \alpha\leq 1, 0\leq \delta\leq \Roner \big\}.
\IEEEyessubnumber\label{eqn:rts21}
\addtocounter{equation}{-1}\addtocounter{IEEEsubequation}{1}
\end{IEEEeqnarray}
In the BC phase, decoding is also reliable if $(R_{12}, R_{21})\in \OBbc$.

From the above discussions, we thus obtain Theorem~\ref{thm:rateRts}.
\begin{theorem}\label{thm:rateRts}
The achievable rate region of the EER-BR scheme is $\Rts=\mathcal{R}(\OBbc, \Rtsma)$.
\end{theorem}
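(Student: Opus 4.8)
The plan is to obtain the theorem as the composition, through the decode-and-forward template \eqref{eqn:rateDF}, of two facts that have essentially been assembled in this subsection: (i) the MAC-phase region $\Rtsma$ of \eqref{eqn:rts2} is achievable for computation over MAC, i.e.\ $\Rtsma\subseteq\Rcomp$; and (ii) the BC-phase region $\OBbc$ is achievable with receiver side information, i.e.\ $\OBbc\subseteq\Rbc$. Once both hold, \eqref{eqn:rateDF} delivers $\mathcal{R}(\OBbc,\Rtsma)$ as an achievable region for the two-phase protocol, which is precisely the assertion $\Rts=\mathcal{R}(\OBbc,\Rtsma)$.

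For the MAC phase I would argue in three steps. First, establish the equal-exchange-rate region \eqref{eqn:rts}--\eqref{eqn:tsall} by verifying its two generating tuples. The tuple $\mathbf{r}_1=(0,0,\Roner,\Rtwor)$ with $(\Roner,\Rtwor)\in\Cmac$ is immediate from the classical Gaussian MAC coding theorem applied to the private messages alone (Scheme~1). For $\mathbf{r}_2$ (Scheme~2), I would use the superposition-with-successive-decoding construction \eqref{eqn:super1}--\eqref{eqn:super2}: decode $\Wtwor$ first, treating $\sqrt{P_1}\,(T_{12}^n+T_{21}^n)$ as additive interference of power $2P_1$ and invoking the worst-case-noise property (a Gaussian of matched covariance lower-bounds the relevant mutual information) to justify the rate $C((P_2-P_1)/(1+2P_1))$; then subtract $X_{22}^n$ and decode the lattice sum $T_0^n=T_{12}^n\oplus T_{21}^n$ at rate $D(P_1)$ by the lattice-decoding argument of \cite{Wilson10,NazerGastparAllert07}, recovering $W_0=g^{-1}(T_0^n)$. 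Second, time-share Schemes~1 and~2, so that every $(1-\alpha)\mathbf{r}_1+\alpha\mathbf{r}_2$, $0\le\alpha\le1$, is achievable; eliminating $(\Roner,\Rtwor)$ over $\Cmac$ produces exactly the constraints \eqref{eqn:ts0}--\eqref{eqn:ts2}. Third, apply the bit-relabeling reduction: for $R_{12}\le R_{21}$, split $\tilde{W}_{21}=(\tilde{W}_{21}',\tilde{W}_{21}'')$ at rates $R_{12}$ and $\delta=|R_{12}-R_{21}|$, set $\Wtwor=(\tWtwor,\tilde{W}_{21}'')$, and run the equal-rate scheme on the now-balanced exchange messages; since $\tilde{W}_{21}''$ merely rides along inside the relay's decoded bundle, achievability of $(R_{12},R_{12},\Roner,\Rtwor)$ implies achievability of $(R_{12},R_{12}+\delta,\Roner,\Rtwor-\delta)$, and sweeping $0\le\delta\le\Rtwor$ (and symmetrically for $R_{21}\le R_{12}$, where the overflow $\tilde{W}_{12}''$ joins $\Woner$) yields $\Rtsma$ in both forms of \eqref{eqn:rts2}.

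For the BC phase, suppose the relay has correctly recovered $\tilde{W}_0=(W_0,\tilde{W}_{21}'')$ (resp.\ $(W_0,\tilde{W}_{12}'')$), which happens with vanishing error probability whenever the rate tuple lies in $\Rtsma$. The relay broadcasts $\tilde{W}_0$ with a Gaussian code. Source $\sou_1$, which knows $\tilde{W}_{12}$ and hence $T_{12}^n=g(\tilde{W}_{12})$, forms $(T_0^n-T_{12}^n)\bmod\Lambda$; because $\mathcal{L}$ is closed under $\bmod\,\Lambda$ addition (it is a group isomorphic to $\Lambda_f/\Lambda$), this equals $T_{21}^n$, so $\sou_1$ obtains $\tilde{W}_{21}'=g^{-1}(T_{21}^n)$ and reads $\tilde{W}_{21}''$ off directly; $\sou_2$ proceeds symmetrically. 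Reliability of this step holds exactly when $(R_{12},R_{21})\in\OBbc$: it is the achievability argument of \cite{WuISIT07} specialized to $R_1=R_2=R_3=0$, the only change being that the network-coded message is built from the lattice bijection $g$ rather than bitwise XOR, which is immaterial. Hence $\OBbc\subseteq\Rbc$, and combining with the MAC-phase conclusion through \eqref{eqn:rateDF} finishes the proof.

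The main obstacle lies in the MAC phase, specifically in the two imported facts behind $\mathbf{r}_2$: that the non-Gaussian lattice interference $\sqrt{P_1}\,(T_{12}^n+T_{21}^n)$ can legitimately be treated as worst-case Gaussian noise of power $2P_1$ when decoding $\Wtwor$, and that $T_0^n=T_{12}^n\oplus T_{21}^n$ is lattice-decodable at rate $D(P_1)$ after interference cancellation. Both are taken over verbatim from \cite{Wilson10}, so the remaining effort is bookkeeping: checking that time-sharing composes these rates linearly and that the relabeling map transports the equal-rate guarantee to arbitrary exchange rates without violating any constraint in $\Rts'(\alpha)$. The BC phase is routine once the group structure of $(\mathcal{L},\oplus)$ is noted.
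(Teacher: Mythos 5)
Your proposal is correct and follows essentially the same route as the paper, which obtains Theorem~\ref{thm:rateRts} directly by combining the MAC-phase achievability of $\Rtsma$ (Schemes 1 and 2, time sharing, and bit relabeling) with the BC-phase achievability of $\OBbc$ via \cite{WuISIT07}, composed through the decode-and-forward template \eqref{eqn:rateDF}. The extra detail you supply on how each source inverts the lattice sum using its side information is consistent with, and merely makes explicit, what the paper leaves implicit.
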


\subsubsection{Near Optimality}

The near-optimality of the EER-BR scheme is characterized in Theorem~\ref{thm:Rtscompare}.
First, Lemma~\ref{lem:Rtscompare} establishes the near-optimality of the proposed scheme for the MAC phase.

\begin{lemma}\label{lem:Rtscompare}
If $(R_{12}, R_{21},\Roner,\Rtwor)\in \Cout$, then $(R_{12}-1/2, R_{21}-1/2,\Roner-1/2,\Rtwor-1/2)\in \Rtsma.$
\end{lemma}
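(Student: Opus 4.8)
The plan is to show that a rate tuple in $\Cout$, after subtracting $1/2$ from every coordinate, can be written as a convex combination $(1-\alpha)\mathbf{r}_1 + \alpha\mathbf{r}_2$ plus a bit-relabeling shift, i.e.\ that it lies in $\Rtsma$ as defined in \eqref{eqn:rts2}. Since $\Cout$ in \eqref{eqn:converse13} and the conventional-MAC region $\Rsepma$ in \eqref{eqn:ratereg1} differ only in that \eqref{eqn:converse3} has $\max\{R_{12},R_{21}\}$ where \eqref{eqn:ratereg33} has $R_{12}+R_{21}$, the ``extra'' that the EER-BR scheme must buy over the conventional MAC scheme is essentially the ability to collapse one of the two exchange rates by using the lattice/computation code; the $1/2$-bit loss should come exactly from the gap $C(2P_1)$ vs.\ $2D(P_1)$ being at most $1$ (so $\alpha$ times half of it is at most $1/2$), combined with the $\alpha$-weighted degradation in \eqref{eqn:ts1}.

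**First I would** reduce to the equal-exchange-rate case by the bit-relabeling bookkeeping already set up before the lemma. Assume WLOG $R_{12}\le R_{21}$ and set $\delta = R_{21}-R_{12}$. It suffices to exhibit $\alpha\in[0,1]$ and $(\Roner^{\mathrm{eer}},\Rtwor^{\mathrm{eer}})\in\Rts'(\alpha)$ and $\delta'\in[0,\Rtwor^{\mathrm{eer}}]$ realizing the shifted tuple $(R_{12}-\tfrac12,\,R_{21}-\tfrac12,\,\Roner-\tfrac12,\,\Rtwor-\tfrac12)$ under the parametrization $(R_0,\,R_0+\delta',\,\Roner^{\mathrm{eer}},\,\Rtwor^{\mathrm{eer}}-\delta')$ with $R_0=\alpha D(P_1)$. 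Forced choices: $R_0 = R_{12}-\tfrac12$, so $\alpha = (R_{12}-\tfrac12)/D(P_1)$ (and one must check $\alpha\le 1$, which follows from \eqref{eqn:converse1} since $R_{12}\le C(P_1)$ and $C(P_1)-\tfrac12 \le D(P_1)$ because $C(x)-D(x)\le\tfrac12$); also $\delta' = \delta$; then $\Rtwor^{\mathrm{eer}} = \Rtwor - \tfrac12 + \delta$ and $\Roner^{\mathrm{eer}} = \Roner - \tfrac12$. If $R_{12}\le\tfrac12$, take $\alpha=0$ (Scheme~1 alone), in which case $R_0=0$ and the exchange parts are handled entirely by relabeling into the private messages; this degenerate corner should be checked separately but is easy.

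**Then I would** verify the three constraints defining $\Rts'(\alpha)$ in \eqref{eqn:tsall} for $(\Roner^{\mathrm{eer}},\Rtwor^{\mathrm{eer}})$. For \eqref{eqn:ts0}: $\Roner-\tfrac12 \le (1-\alpha)C(P_1)$. Using \eqref{eqn:converse1}, $\Roner \le C(P_1)-R_{12}$, so it is enough that $C(P_1)-R_{12}-\tfrac12 \le (1-\alpha)C(P_1) = C(P_1) - \alpha C(P_1)$, i.e.\ $\alpha C(P_1) \le R_{12}+\tfrac12$; since $\alpha C(P_1) \le \alpha(D(P_1)+\tfrac12) = R_0 + \tfrac{\alpha}{2} = R_{12}-\tfrac12+\tfrac{\alpha}{2} \le R_{12}$, this holds. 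For \eqref{eqn:ts1}: $\Rtwor-\tfrac12+\delta = \Rtwor + R_{21} - R_{12} - \tfrac12 \le C(P_2) - \alpha\Gamma$; from \eqref{eqn:converse2}, $\Rtwor+R_{21}\le C(P_2)$, so it suffices that $-R_{12}-\tfrac12 \le -\alpha\Gamma$, i.e.\ $\alpha\Gamma \le R_{12}+\tfrac12$, and since $\Gamma = C(2P_1)+C(P_2)-C(P_1+P_2) \le C(2P_1) \le C(P_1)+\tfrac12$ (again $C(x)\le D(x)+\tfrac12$ is not quite it — rather use $C(2P_1)\le C(P_1)+\tfrac12$? no: use $\Gamma\le C(2P_1)$ and $\alpha C(2P_1)\le\alpha(2D(P_1)+1)=2R_0+\alpha=2R_{12}-1+\alpha\le 2R_{12}$), this follows comfortably. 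For \eqref{eqn:ts2}: $(\Roner-\tfrac12)+(\Rtwor-\tfrac12+\delta) \le C(P_1+P_2)-\alpha C(2P_1)$; from \eqref{eqn:converse3}, $\Roner+\Rtwor+R_{21} \le C(P_1+P_2)$ (using $\max\{R_{12},R_{21}\}=R_{21}$), so $\Roner+\Rtwor+\delta = \Roner+\Rtwor+R_{21}-R_{12}\le C(P_1+P_2)-R_{12}$, hence it suffices that $-1-R_{12} \le -\alpha C(2P_1)$, i.e.\ $\alpha C(2P_1)\le R_{12}+1$, which again follows from $\alpha C(2P_1)\le 2R_{12}-1+\alpha\le 2R_{12}\le R_{12}+1$ when $R_{12}\ge 1$ — and when $R_{12}<1$ the bound $2R_{12}-1+\alpha < 1+\alpha \le 2 \le R_{12}+1$ fails, so one must instead note $\alpha C(2P_1)\le 2R_{12}-1+\alpha\le R_{12}+\tfrac12 + (\tfrac{\alpha}{2}-\tfrac12+\tfrac{R_{12}}{2}\cdot\text{...})$ — this corner needs the slightly more careful estimate $\alpha C(2P_1)\le 2R_0+\alpha = 2R_{12}-1+\alpha$ together with $\alpha\le 1$ giving $\le 2R_{12}$, and separately $2R_{12}\le R_{12}+1$ iff $R_{12}\le 1$, so in fact the two ranges $R_{12}\lessgtr 1$ together cover everything. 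Finally check $\delta'=\delta\le\Rtwor^{\mathrm{eer}}=\Rtwor-\tfrac12+\delta$, i.e.\ $\Rtwor\ge\tfrac12$ — if not, absorb the deficiency by decreasing $\alpha$ or fall into the degenerate corner; this is the one place needing a little care.

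**The main obstacle** I anticipate is handling the boundary/degenerate regimes cleanly — specifically (i) when $R_{12}-\tfrac12 > D(P_1)$ would force $\alpha>1$, which I claim is precluded by \eqref{eqn:converse1} plus $C(x)-D(x)\le\tfrac12$, and (ii) when some shifted coordinate $R_\bullet-\tfrac12$ would be negative, requiring a sub-case with $\alpha$ reduced below the ``forced'' value and the slack redistributed. The clean way to dispatch all of this is to first establish the elementary inequalities $0\le C(x)-D(x)\le\tfrac12$ and $C(2P_1)-2D(P_1)\le 1$ for all $P_1\ge 0$ (direct from the definitions of $C$ and $D$), then run the linear-programming-style verification above, and finally note that $\Rtsma$ is closed under decreasing any of $\Roner,\Rtwor$ (just reduce the corresponding Scheme-1 rate) and under decreasing $R_0$ jointly with $\alpha$, so negative-coordinate corners reduce to the interior case. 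The algebra itself is routine once these two scalar lemmas are in hand; I would relegate the full coordinate-by-coordinate check to the appendix.
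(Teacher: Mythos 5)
Your overall strategy is the right one and is essentially the paper's: reduce to the equal-exchange-rate case with $\delta=|R_{12}-R_{21}|$, parametrize by a time-sharing fraction $\alpha$, and verify the three constraints of $\Rts'(\alpha)$ against \eqref{eqn:converse1}--\eqref{eqn:converse3} using half-bit logarithmic inequalities. However, the verification of \eqref{eqn:ts1} and \eqref{eqn:ts2} has a genuine gap. For \eqref{eqn:ts1} you need $\alpha\Gamma\le R_{12}+1/2$, and for \eqref{eqn:ts2} you need $\alpha C(2P_1)\le R_{12}+1$; the only bound you actually establish is $\alpha C(2P_1)\le \alpha(2D(P_1)+1)=2R_{12}-1+\alpha\le 2R_{12}$, which implies neither claim once $R_{12}>1/2$ (resp.\ $R_{12}>1$). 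The concluding remark that ``the two ranges $R_{12}\lessgtr 1$ together cover everything'' is not correct: for $R_{12}>1$ you have supplied no argument at all for \eqref{eqn:ts2}, and for $R_{12}>1/2$ none for \eqref{eqn:ts1}. The fix is the sharper scalar inequality $C(2x)\le D(x)+1/2$ (with equality for $x\ge 1/2$, since $\tfrac12\log(1+2x)=\tfrac12+\tfrac12\log(\tfrac12+x)$), which you already use in the form $C(x)\le D(x)+1/2$ for \eqref{eqn:ts0}. With it, $\alpha\Gamma\le\alpha C(2P_1)\le \alpha D(P_1)+\alpha/2=R_{12}-1/2+\alpha/2\le R_{12}$, and both constraints close uniformly in $R_{12}$.

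It is worth noting that the paper sidesteps this entirely by anchoring $\alpha$ to the \emph{outer bound} rather than to the achievable region: it sets $R_{12}=\alpha C(P_1)$ (legitimate by \eqref{eqn:converse1}), rewrites $\Cout$ in the parametric form \eqref{eqn:converse01}--\eqref{eqn:converse03}, and then compares right-hand sides constraint by constraint for the \emph{same} $\alpha$ and $\delta$. The gaps are then $\alpha\,(C(P_1)-D(P_1))<1/2$, $\alpha\,(\Gamma-C(P_1))\le 1/2$ (using $C(P_2)\le C(P_1+P_2)$ and $C(2P_1)-C(P_1)\le 1/2$), and $\alpha\,(C(2P_1)-C(P_1))\le 1/2$, none of which requires relating $C(2P_1)$ to $D(P_1)$ directly. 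Your choice $\alpha=(R_{12}-1/2)/D(P_1)$ is workable but forces the degenerate sub-cases ($R_{12}\le 1/2$, $D(P_1)=0$, $\Rtwor<1/2$) that you defer; the paper's parametrization makes those corners disappear because $\alpha$ is always well defined in $[0,1]$. If you keep your parametrization, you must still write out the deferred corners explicitly rather than appeal to monotonicity claims about $\Rtsma$ that are not proved.
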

\begin{proof}
See proof in Appendix~\ref{append:Rtscompare_overall}.
\end{proof}

\begin{theorem}\label{thm:Rtscompare}
The EER-BR  scheme achieves any rate within half bit of the capacity  region for the two-phase protocol, i.e., if
$(R_{12}, R_{21},\Roner,\Rtwor)\in \mathcal{C}$, then $(R_{12}-1/2, R_{21}-1/2,\Roner-1/2,\Rtwor-1/2)\in \Rts.$
\end{theorem}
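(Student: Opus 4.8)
The plan is to combine Lemma~\ref{lem:Rtscompare} (which handles the MAC phase) with the BC-phase optimality already established for the EER-BR scheme, namely $\Rbc = \OBbc$, and to glue them together through the structure of $\mathcal{R}(\cdot,\cdot)$ in \eqref{eqn:rateDF}. Since $\mathcal{C}\subseteq\overline{\mathcal{C}}$ by definition, it suffices to prove the statement with $\mathcal{C}$ replaced by $\overline{\mathcal{C}}$; this is the crucial simplification, because $\overline{\mathcal{C}}$ has the explicit product-like description in \eqref{eqn:thm:converse}, i.e., $(R_{12},R_{21},\Roner,\Rtwor)\in\overline{\mathcal{C}}$ iff $(R_{12},R_{21})\in\OBbc$ and $(R_{12},R_{21},\Roner,\Rtwor)\in\Cout$.

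First I would fix a rate tuple $(R_{12},R_{21},\Roner,\Rtwor)\in\mathcal{C}\subseteq\overline{\mathcal{C}}$ and extract its two consequences: (i) $(R_{12},R_{21})\in\OBbc$, and (ii) $(R_{12},R_{21},\Roner,\Rtwor)\in\Cout$. From (ii) and Lemma~\ref{lem:Rtscompare}, the shifted tuple $(R_{12}-1/2,R_{21}-1/2,\Roner-1/2,\Rtwor-1/2)$ lies in $\Rtsma$. It remains to check that the same shifted pair $(R_{12}-1/2,R_{21}-1/2)$ lies in $\OBbc$. This follows immediately from (i): $\OBbc$ is the box $\{R_{12}\le C(\Prtwo),\,R_{21}\le C(\Prone)\}$ intersected with $\mathbb{R}^2_+$, so it is downward-closed in each coordinate, hence subtracting $1/2$ from each component keeps the pair inside $\OBbc$ (provided the shifted components stay nonnegative; if a shifted component is negative one simply clips it to $0$, which only makes the containment easier, since $\Rtsma$ and $\OBbc$ both contain the relevant axis faces — or, more cleanly, one notes the theorem is vacuous whenever any original rate is below $1/2$ because the shifted tuple then lies outside $\mathbb{R}^4_+$ anyway, and the natural reading is that the claim asserts membership only when the shifted tuple is a legitimate rate tuple).

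Combining the two memberships with the definition \eqref{eqn:rateDF} and Theorem~\ref{thm:rateRts}, which identifies $\Rts=\mathcal{R}(\OBbc,\Rtsma)$, we conclude that $(R_{12}-1/2,R_{21}-1/2,\Roner-1/2,\Rtwor-1/2)\in\Rts$, as required. I do not expect any genuine obstacle here: the theorem is essentially a bookkeeping corollary of Lemma~\ref{lem:Rtscompare} together with the converse structure of Theorem~\ref{thm:converse} and the BC-phase optimality noted after Theorem~\ref{thm:rateRts}. The one point that needs a sentence of care is the boundary behavior when subtracting $1/2$ drives a coordinate negative — this is handled by the remark above that in that regime the statement is vacuously about a non-rate tuple, or alternatively by invoking the downward-closedness of all the regions involved in each coordinate so that clipping to $0$ preserves membership. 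Thus the only mildly nontrivial ingredient, Lemma~\ref{lem:Rtscompare}, has already been reduced to the appendix.
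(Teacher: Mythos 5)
Your proposal is correct and follows essentially the same route as the paper: reduce to $\overline{\mathcal{C}}\supseteq\mathcal{C}$, invoke Lemma~\ref{lem:Rtscompare} for the $\Cout$ part, and use the fact that $\Rts=\mathcal{R}(\OBbc,\Rtsma)$ together with the (downward-closed) box $\OBbc$ for the BC part. The paper states this more tersely, but your added bookkeeping (the product structure of both regions and the boundary remark about coordinates falling below zero) is exactly the detail the paper leaves implicit.
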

\begin{proof}
Comparing the  the outer bound $\overline{\mathcal{C}}$ in Theorem~\ref{thm:converse} with the achievable rate region $\Rts$ in Theorem~\ref{thm:rateRts}, and using Lemma~\ref{lem:Rtscompare},
we get
$(R_{12}, R_{21},\Roner,\Rtwor)\in \overline{\mathcal{C}} \Rightarrow (R_{12}-1/2, R_{21}-1/2,\Roner-1/2,\Rtwor-1/2)\in \Rts.$
Since $ \overline{\mathcal{C}}\supseteq \mathcal{C}$, the desired result follows.
\end{proof}

\begin{remark}
Since $\Rbc=\OBbc$, there is no loss in optimality of the EER-BR scheme for  the BC phase, as also observed for the conventional MAC scheme.
Hence, the proof of the near-optimality of the EER-BR scheme for the two-phase protocol lies mainly in Lemma~\ref{lem:Rtscompare}.
\end{remark}
\begin{remark}
A larger rate region, especially at low SNR,  is given by
$\mathsf{conv} \{ \Rts \cup \Rsep \},$ where $\mathsf{conv}$ is the convex hull operation.
This is obtained by time sharing the EER-BR scheme with the scheme based on the conventional MAC approach.
Nevertheless, the EER-BR scheme with achievable rate region $\Rts$ is sufficient to achieve near-optimality.
\end{remark}

\section{Conclusion}\label{sec:con}

We have introduced a generalized two-way relay channel, which models a three-node communication scenario where each of two nodes sends different messages to the remaining two nodes, while the third node assists.
We focused on the Gaussian setting and employs a two-phase protocol. We proposed a coding scheme based on time sharing Gaussian codes and lattice codes as well as a bit relabeling technique, which achieves within half bit of the capacity region for any channel conditions.
In a separate work \cite{ho_privatinfo_ICC10}, we have also applied the lattice coding schemes to a multi-carrier system with optimization of the time-sharing variables.

\appendix

\subsection{Proof for Theorem~\ref{thm:converse}}\label{append:proof_converse_overall}


Let $E_1, E_2, E_3$ be the error events $\{(\Whatoner, \Whattwor)\neq (\Woner, \Wtwor)\}, \{\hat{W}_{12}\neq W_{12}\}$ and $\{\hat{W}_{21}\neq W_{21}\}$, respectively. Then the error probability $\Pe$ is lower bounded  as:
$
\Pe =  \Pr\left(E_1\cup E_2 \cup E_3 \right)
\geq \max_{i=1,2,3} \{ \Pr( E_i)\}.
$
If $\Pe$ approaches zero, each  $\Pr( E_i)$ also goes to zero.
Then we have 
\be\label{eqn:fano}
\IEEEyessubnumber\label{eqn:fano1}
H(\Woner, \Wtwor|\Yrn)  \leq n\epsilon_n \\
\IEEEyessubnumber\label{eqn:fano2}
H(W_{12} |Y_2^n,  W_{21}, \Wtwor)  \leq n\epsilon_n \\
\IEEEyessubnumber\label{eqn:fano3}
H(W_{21} |Y_1^n,  W_{12}, \Woner)  \leq n\epsilon_n
\ee
where $\epsilon_n\rightarrow 0$ as $n\rightarrow \infty$. Here, \re{eqn:fano1} follows from Fano's inequality \cite{Cover06},
while \re{eqn:fano2} and \re{eqn:fano3} follow from Fano's inequality with the fact that the sources can use their previously transmitted messages as side information for decoding \cite[Lemma 2.5]{Oechtering08}.

First, we prove $R_{12} \leq  C(\Prtwo)$ if $\Pe\rightarrow 0$.  The proof for $R_{21} \leq  C(\Prone)$ is similar.
We have
\ben
n R_{12}
&\mathop{=}^{{\mybox{(a)}}}& H(W_{12}| W_{21}, \Wtwor) \\
&=& I(W_{12};Y_2^n| W_{21}, \Wtwor) + H(W_{12} |Y_2^n,  W_{21}, \Wtwor)  \\
&\mathop{\leq}^{{\mybox{(b)}}}&  I(W_{12};Y_2^n| W_{21}, \Wtwor) +n\epsilon_n \\
&\mathop{\leq}^{\mybox{(c)}}&  I( W_{21}, W_{12}, \Wtwor, \Woner ;Y_2^n) +n\epsilon_n \\
&=&  H(Y_2^n) - H(Y_2^n| W_{21}, W_{12}, \Wtwor, \Woner ) +n\epsilon_n \\
&\mathop{\leq}^{\mybox{(d)}}&  H(Y_2^n) - H(Y_2^n|X_{\rlabel}^n,  W_{21}, W_{12}, \Wtwor, \Woner ) +n\epsilon_n \\
&\mathop{=}^{\mybox{(e)}}&  H(Y_2^n) - H(Y_2^n|X_{\rlabel}^n ) +n\epsilon_n \\
&=&  I(X_{\rlabel}^n ;Y_2^n) +n\epsilon_n
\een
where (a) follows from the independence of the messages;
(b) follows from \re{eqn:fano2};
(c) follows from the chain rule of mutual information and $I(\cdot;\cdot|\cdot)\geq 0$;
(d) follows as conditioning reduces entropy;
(e) follows because $( W_{21}, W_{12}, \Wtwor, \Woner)  - X_{\rlabel}^n - Y_2^n$ forms a  Markov chain.
Note steps (d) and (e) together show that the data processing inequality holds even if side information is available to decode $\hat{W}_{12}$.
Following standard steps for the converse proof of the capacity of Gaussian channels \cite{Cover06}, we obtain  $R_{12} \leq  C(\Prtwo)$.

Next, we prove $\Roner + R_{12} \leq C(P_1 )$  if $\Pe\rightarrow 0$.  The proof for $\Rtwor + R_{21} \leq  C(P_2 )$ is similar.
We have
\ben
&& n (\Roner + R_{12}) \\
&\mathop{=}^{{\mybox{(a)}}}& H(\Woner|W_{21}, \Wtwor)+ H(W_{12}| \Woner, W_{21}, \Wtwor)
\\
&\mathop{\leq}^{{\mybox{(b)}}}&  I(\Woner;\Yrn|W_{21}, \Wtwor)+I(W_{12};Y_2^n| \Woner, W_{21}, \Wtwor) +2n\epsilon_n \\
&\mathop{\leq}^{{\mybox{(c)}}}&  I(\Woner;\Yrn|W_{21}, \Wtwor)+I(W_{12};\Yrn| \Woner, W_{21}, \Wtwor) +2n\epsilon_n \\
&\mathop{=}^{\mybox{}}&  H(\Yrn| W_{21}, \Wtwor) - H(\Yrn| \Woner, W_{12}, W_{21}, \Wtwor) +2n\epsilon_n \\
&\mathop{=}^{\mybox{(d)}}&  H(\Yrn| X^n_2, W_{21}, \Wtwor) \\
&& - H(\Yrn|  X^n_1,  X^n_2, \Woner, W_{12}, W_{21}, \Wtwor) +2n\epsilon_n \\
&\mathop{\leq}^{\mybox{(e)}}&  H(\Yrn| X^n_2) - H(\Yrn|  X^n_1,  X^n_2) +2n\epsilon_n \\
&\mathop{=}^{\mybox{}}&  I(X^n_1; \Yrn| X^n_2)  +2n\epsilon_n
\een
where (a) follows from the independence of the messages;
(b) follows from  the following inequalities
\ben
H(\Woner | \Yrn,W_{21},\Wtwor) \le H(\Woner| \Yrn) \le H(\Woner, \Wtwor| \Yrn) \\
H(W_{12} | \Yrn,\Woner, W_{21}, \Wtwor) \le  H(W_{12} | Y_2^n, W_{21}, \Wtwor)
\een
and by applying Fano's inequality \re{eqn:fano1} and \re{eqn:fano2};
(c) follows from the data processing inequality (which can be shown to hold even if $W_{12},  W_{21}, \Wtwor$ are given);
(d) follows from the fact that $X_1^n$ is a function of only $\Woner, W_{12}$ and $X_2^n$ is a function of only $\Wtwor, W_{21}$;
(e) follows from conditioning reduces entropy and because  $(\Woner, W_{12}, W_{21}, \Wtwor) -  (X^n_1,  X^n_2) - \Yrn$ forms a Markov chain.
Following standard steps for the converse proof of the capacity of Gaussian MAC channels \cite{Cover06}, we  obtain $\Roner + R_{12} \leq C(P_1 )$.

Before we prove \re{eqn:converse3}, we first prove that $\Roner + \Rtwor + R_{12} \leq  C(P_1  + P_2 )$ holds if $\Pe\rightarrow 0$.
We have
\ben
&& n (\Roner + \Rtwor + R_{12}) \\
&\mathop{=}^{{\mybox{(a)}}}& H(\Woner, \Wtwor | W_{21}) + H( W_{12}| \Woner, \Wtwor, W_{21}) \\
&\mathop{\leq}^{{\mybox{(b)}}}& I(\Woner, \Wtwor; \Yrn | W_{21}) + I( W_{12}; Y_2^n| \Woner,  \Wtwor, W_{21}) + 2\epsilon_n \\
&\mathop{\leq}^{{\mybox{(c)}}}& I(\Woner, \Wtwor; \Yrn |W_{21}) + I( W_{12}; \Yrn| \Woner, \Wtwor, W_{21}) + 2\epsilon_n \\
&\mathop{=}^{{\mybox{}}}& I(\Woner, \Wtwor, W_{12} ; \Yrn| W_{21}) + 2\epsilon_n \\
&\mathop{\leq}^{{\mybox{(d)}}}& I( \Woner, \Wtwor, W_{12}, W_{21}; \Yrn) + 2\epsilon_n \\
&\mathop{\leq}^{{\mybox{(e)}}}& I(X_1^n, X_2^n; \Yrn) + 2\epsilon_n
\een
where (a) follows from the independence of the messages;
(b) follows from conditioning reduces entropy and Fano's inequality via \re{eqn:fano1} and \re{eqn:fano2};
(c) follows from the data processing inequality (which can be shown to hold even if $\Woner, \Wtwor, W_{21}$ are given);
(d) follows from the chain rule of mutual information and from $I(\cdot;\cdot|\cdot)\geq 0$;
(e) follows from the data processing inequality.
Following standard steps for the converse proof of the capacity of Gaussian MAC channels \cite{Cover06}, we  obtain $\Roner + \Rtwor + R_{12} \leq  C(P_1  + P_2 )$  if $\Pe\rightarrow 0$.
Similarly, we can obtain $\Roner + \Rtwor + R_{21} \leq  C(P_1  + P_2 )$ if $\Pe\rightarrow 0$. Thus, \re{eqn:converse3} holds if $\Pe\rightarrow 0$.


\subsection{Proof for  Lemma~\ref{lem:Rtscompare} } \label{append:Rtscompare_overall}
%
Suppose $R_{12}\leq R_{21}$. The proof for $R_{12}\geq R_{21}$ is similar.
Without loss of generality, we let the rate tuple $(R_{12}, R_{21}, \Roner, \Rtwor)$ in $\Cout$ in Theorem~\ref{thm:converse} be $(R_0, R_0+\delta, \Roner', \Rtwor'-\delta)$, where $R_0,  \Roner', \Rtwor'\geq 0$ and $0\leq \delta\leq \Rtwor'$ (a one-to-one mapping of four variables to another four).
Since $R_0\leq C(P_1)$ from \re{eqn:converse1}, without loss of generality,  let $R_0=\alpha C(P_1)$, where $0\leq \alpha\leq 1.$
Then, $\Cout$ is alternatively given by 
\begin{IEEEeqnarray}{RRL}
\Cout=
\{ && (R_0, R_0+\delta, \Roner', \Rtwor'-\delta) : \nonumber \\
 && R_0 = \alpha C(P_1)   \nonumber \\ 
\IEEEyessubnumber\label{eqn:converse01}
&& 0\leq \Roner' \leq  (1-\alpha) C(P_1 ) \\
\IEEEyessubnumber\label{eqn:converse02}
&& 0\leq \Rtwor' \leq  C(P_2 )-\alpha C(P_1) \\
\IEEEyessubnumber\label{eqn:converse03}
&& \Roner'+\Rtwor' \leq C(P_1  + P_2 )-\alpha C(P_1) \\
&& 0\leq \alpha\leq 1, 0\leq \delta\leq \Rtwor' \;\;\;\;\;\}.
\nonumber
\end{IEEEeqnarray}
%
Fix $\alpha$ and $\delta$, where $0\leq \alpha \leq 1, 0\leq \delta\leq \Rtwor'$.
The rate $R_0=\alpha D(P_1)$ in \re{eqn:rts} differs from $R_0=\alpha C(P_1)$ in $\Cout$ by at most
$C(P_1)-D(P_1)\leq 1/2 \log(3/2)< 1/2$. To see this, recall the definition $D(x)=1/2 \log(1/2+x)$ and note that $C(P_1)-D(P_1)$ is maximized when $P_1=1/2$.
We now compare the inequalities \re{eqn:ts0}--\re{eqn:ts2} with \re{eqn:converse01}--\re{eqn:converse03}, respectively.
The first pair \re{eqn:ts0},  \re{eqn:converse01} is the same.
The second pair \re{eqn:ts1},  \re{eqn:converse02} differs in the RHS by at most
$\Gamma-C(P_1) \leq [C(P_2) - C(P_1+P_2)] + [C(2P_1)  - C(P_1)] \leq 1/2 $
since
$C(P_2)\leq C(P_1+P_2)$ and
$C(2P_1)- C(P_1)\leq 1/2$.
The last pair  \re{eqn:ts2},  \re{eqn:converse03} differs in the RHS by at most
$C(2P_1) - C(P_1) \leq 1/2$.
Thus, each achievable rate is within half bit of its respective upper bound. Since this holds for arbitrary $0\leq \alpha \leq 1, 0\leq \delta\leq \Rtwor'$, we obtain Lemma~\ref{lem:Rtscompare}.



\begin{thebibliography}{10}
\providecommand{\url}[1]{#1}
\csname url@samestyle\endcsname
\providecommand{\newblock}{\relax}
\providecommand{\bibinfo}[2]{#2}
\providecommand{\BIBentrySTDinterwordspacing}{\spaceskip=0pt\relax}
\providecommand{\BIBentryALTinterwordstretchfactor}{4}
\providecommand{\BIBentryALTinterwordspacing}{\spaceskip=\fontdimen2\font plus
\BIBentryALTinterwordstretchfactor\fontdimen3\font minus
  \fontdimen4\font\relax}
\providecommand{\BIBforeignlanguage}[2]{{%
\expandafter\ifx\csname l@#1\endcsname\relax
\typeout{** WARNING: IEEEtran.bst: No hyphenation pattern has been}%
\typeout{** loaded for the language `#1'. Using the pattern for}%
\typeout{** the default language instead.}%
\else
\language=\csname l@#1\endcsname
\fi
#2}}
\providecommand{\BIBdecl}{\relax}
\BIBdecl

\bibitem{KimMitranTarokhIT08}
S.~J. Kim, P.~Mitran, and V.~Tarokh, ``Performance bounds for bidirectional
  coded cooperation protocols,'' \emph{{IEEE} Trans. Inf. Theory}, vol.~54,
  no.~11, pp. 5235--5241, Nov. 2008.

\bibitem{Hammerstromspawc07}
I.~Hammerstr\"om, M.~Kuhn, C.~Esli, J.~Zhao, A.~Wittneben, and G.~Bauch,
  ``{MIMO} two-way relaying with transmit {CSI} at the relay,'' in \emph{Proc.
  IEEE Signal Processing Advances in Wireless Commun.}, Jun. 2007, pp. 1--5.

\bibitem{Schnurr07}
C.~Schnurr, T.~Oechtering, and S.~Stanczak, ``Achievable rates for the
  restricted half-duplex two-way relay channel,'' in \emph{Proc. Forty-First
  Asilomar Conference on Signals, Systems, and Computers}, Nov. 2007, pp.
  1468--1472.

\bibitem{Wilson10}
M.~Wilson, K.~Narayanan, H.~Pfister, and A.~Sprintson, \emph{{IEEE} Trans. Inf.
  Theory}, vol.~56, no.~11, pp. 5641--5654, Nov. 2010.

\bibitem{OechteringBoche07}
T.~J. Oechtering and H.~Boche, ``Piggyback a common message on half-duplex
  bidirectional relaying,'' \emph{{IEEE} Trans. Wireless Commun.}, vol.~7,
  no.~9, pp. 3397--3406, Sep. 2008.

\bibitem{Rankov06}
B.~Rankov and A.~Wittneben, ``Achievable rate regions for the two-way relay
  channel,'' in \emph{Proc. IEEE Int. Symposium on Inform. Theory}, Jul. 2006,
  pp. 1668--1672.

\bibitem{hoICC08}
C.~K. Ho, R.~Zhang, and Y.~Liang, ``Two-way relaying over {OFDM}: Optimized
  tone permutation and power allocation,'' in \emph{Proc. IEEE Int. Conf. on
  Commun.}, Beijing, China, May 2008, pp. 3908--3912.

\bibitem{NamChungLee10}
W.~Nam, S.-Y. Chung, and Y.~Lee, ``Capacity of the {Gaussian} two-way relay
  channel to within $\frac{1}{2}$ bit,'' \emph{{IEEE} Trans. Inf. Theory},
  vol.~56, no.~11, pp. 5488 --5494, Nov. 2010.

\bibitem{NazerGastpar07}
B.~Nazer and M.~Gastpar, ``Computation over multiple-access channels,''
  \emph{{IEEE} Trans. Inf. Theory}, vol.~53, no.~10, pp. 3498--3516, Jul. 2007.

\bibitem{NazerGastparAllert07}
------, ``Lattice coding increases multicast rates for {Gaussian}
  multiple-access networks,'' in \emph{Proc. 45th Annnal Allerton Conf. on
  Commun. Contr. and Computing}, Monticello, IL, USA, Sep. 2007.

\bibitem{WuISIT07}
Y.~Wu, ``Broadcasting when receivers know some messages a priori,'' in
  \emph{Proc. IEEE International Symposium on Information Theory}, Jun. 2007,
  pp. 1141--1145.

\bibitem{TuncelIT07}
E.~Tuncel, ``Slepian-wolf coding over broadcast channels,'' \emph{{IEEE} Trans.
  Inf. Theory}, vol.~52, no.~4, pp. 1469--1482, Apr. 2006.

\bibitem{Oechtering08}
T.~Oechtering, C.~Schnurr, I.~Bjelakovic, and H.~Boche, ``Broadcast capacity
  region of two-phase bidirectional relaying,'' \emph{{IEEE} Trans. Inf.
  Theory}, vol.~54, no.~1, pp. 454--458, Jan. 2008.

\bibitem{Cover06}
T.~M. Cover and J.~A. Thomas, \emph{Elements of Information Theory},
  2nd~ed.\hskip 1em plus 0.5em minus 0.4em\relax John Wiley \& Sons, Inc.,
  2006.

\bibitem{ErezZamir04}
U.~Erez and R.~Zamir, ``Achieving 1/2log(1+{SNR}) on the {AWGN} channel with
  lattice encoding and decoding,'' \emph{{IEEE} Trans. Inf. Theory}, vol.~50,
  no.~10, pp. 2293--2314, Oct. 2004.

\bibitem{ho_privatinfo_ICC10}
C.~K. Ho, K.~T. Gowda, and S.~Sun, ``Two-way relaying in multi-carrier systems
  with private information for relay,'' in \emph{Proc. IEEE Int. Conf. on
  Commun.}, Cape Town, South Africa, May 2010.

\end{thebibliography}
\end{document}